\documentclass[reqno, 11pt, final]{amsart}\usepackage{graphicx}
\oddsidemargin=0 mm \evensidemargin=0 mm
\textwidth=160mm\textheight=190mm\parindent=8mm\frenchspacing
 \newtheorem{theorem}{Theorem}[section]
   
 \newtheorem{prop}[theorem]{Proposition} 
  \newtheorem*{acknowledgement*}{Acknowledgements}%

\def\res{\mathop{\hbox{\vrule height 7pt width.5pt depth 0pt\vrule height .5pt width 6pt depth 0pt}}\nolimits}

\def\res{\mathop{\hbox{\vrule height 7pt width.5pt depth 0pt\vrule height .5pt width 6pt depth0pt}}\nolimits}

\def\argmin{\hbox{argmin}}

 \theoremstyle{definition}
 
 \theoremstyle{remark}

 \newcommand{\be}{\begin{equation}}
\newcommand{\disp}{\displaystyle}
\newcommand{\ee}{\end{equation}}
\newcommand{\ba}{\begin{array}}
\newcommand{\ea}{\end{array}}
\newcommand{\vk}{\vspace {.3cm}}

\newcommand{\vks}{\vspace {.1cm}}
\newcommand{\vkk}{\vspace {.6cm}}

\newcommand{\hk}{\hspace {.3cm}}

\newcommand{\hkk}{\hspace {.6cm}}
\newcommand{\hkkk}{\hspace {.9cm}}
\def\real{I\!\!R}

\newcommand{\uz}  {\mbox{\boldmath{$w$}}}
\newcommand{\uuz}  {\mbox{\boldmath{$u$}}}
\newcommand{\iz}  {\mbox{\boldmath{$i_1$}}}
\newcommand{\vz}  {\mbox{\boldmath{$v$}}}

\newcommand{\unoz}  {\mbox{\boldmath{$1$}}}

\newcommand{\Dz}  {\mbox{\boldmath{$D$}}}
\newcommand{\Iz}  {\mbox{\boldmath{$I$}}}
\newcommand{\Az}  {\mbox{\boldmath{$A$}}}
\newcommand{\Bz}  {\mbox{\boldmath{$B$}}}
\newcommand{\Ba}  {\mbox{\boldmath{$B$}}_1}
\newcommand{\Bb}  {\mbox{\boldmath{$B$}}_2}
\newcommand{\Jd}  {J_{\mbox{\it \tiny D}}}



\def\argmin{\mathop{{\rm argmin}}\nolimits}
\def\disp{\displaystyle}







%


\def\res{\hbox{\vrule height8pt\vrule height0.4pt width8pt\kern1pt}}



\def\sbv {S\kern-1.5pt BV}

\def\sbh {S\kern-2pt B\kern-2pt H}


  \def\s{\sigma}   
 
\def\n{{\s}} 

\title{Mechanics of reversible unzipping}

 \author[]{ F. MADDALENA, D. PERCIVALE, G. PUGLISI,
 L. TRUSKINOVSKY}
  \address{ Dipartimento di Matematica Politecnico di Bari,
  via Re David 200, 70125 Bari, Italy}
  \address{Dipartimento di Ingegneria della Produzione
  Termoenergetica e Modelli Matematici, Universit\`{a}
  di Genova, Piazzale Kennedy, Fiera del Mare, Padiglione D, 16129 Genova, Italy}
\address{
Dipartimento di Ingegneria Civile e Ambientale Politecnico di Bari,
  via Re David 200, 70125 Bari, Italy}
  \address{Laboratoire de Mechanique des Solides,
CNRS-UMR 7649,
  Ecole Polytechnique,
91128, Palaiseau, France \vkk}
 \email{f.maddalena@poliba.it, percivale@diptem.unige.it,
  g.puglisi@poliba.it,\newline trusk@lms.polytechnique.fr}

 \subjclass{}
\begin{document}

\begin{abstract}
We study the mechanics of a reversible decohesion (unzipping) of an elastic layer subjected to quasi-static
end-point loading. At the micro level the system is simulated by an elastic chain
of particles interacting with a rigid foundation through breakable springs.
Such system can be viewed as prototypical for the description of a wide range of
phenomena from peeling of polymeric tapes, to rolling of cells, working of
gecko's fibrillar structures and denaturation of DNA. We construct a rigorous continuum limit of the discrete model which captures both stable and metastable configurations and present a detailed parametric study of the interplay between elastic and cohesive interactions. We show that the model reproduces the experimentally observed abrupt transition from an
incremental evolution of the adhesion front to a sudden complete decohesion of a
macroscopic segment of the adhesion layer. As the microscopic parameters vary
the macroscopic response changes from quasi-ductile to quasi-brittle, with
corresponding decrease in the size of the adhesion hysteresis. At the
micro-scale this corresponds to a transition from a `localized' to a `diffuse'
structure of the decohesion front (domain wall). We obtain an explicit expression for the critical debonding threshold in the limit when the internal length scales are much smaller than the size of the system.  The achieved parametric control of the
microscopic mechanism can be used in the design of new biological inspired
 adhesion devices and machines.
 \vkk

\noindent Key words: Unzipping, Adhesion, Peeling, Hysteresis, DNA, Gecko, Calculus of Variations,
$\Gamma$-convergence.

\end{abstract}

\maketitle

\section*{Introduction}
\setcounter{equation}{0}

Adhesion phenomena are governed by complex energy exchanges between multiple
scales representing hierarchical structures. The phenomenological modeling of
adhesion has been successful in describing a
variety of experimentally observed static and dynamic regimes (see
\cite{Frem, Ken, PPG, CG}). The phenomenological models, however, are of a
black box type and have limited predictive power outside of a particular range
of parameters. More importantly, they can not be used for the microstructural
optimization of the artificially created adhering materials and nanorobotics devices. It is therefore of
interest to develop a multi-scale approach linking the microscopic
attachment-detachment mechanisms with the macroscopic phenomenological
parameters. This step is crucial for the analysis of a variety of adhesion
related phenomena from fiber decohesion in composites \cite{MP, MT} and
crazing/peeling phenomena in polymers \cite{kramer,Ken}, to the activity of focal
adhesions involved in cell motility \cite{Des,MHG,DTSH} and  the low temperature denaturation of single molecule DNA \cite{Pey}.

In this paper we contribute to this general task by considering the minimal
 model accounting for the interplay between elasticity and cohesion.  While this model captures only the most important effects
associated with quasi-static decohesion, it has the advantage of being amenable
to a completely transparent mathematical analysis allowing one to study how
macroscopic responses vary depending on the microscopic parameters.  We
focus on the case when the debonding is reversible. This type of decohesion, also known as \emph{unzipping}, is crucial for the working
of  a variety of biological systems (\cite{SG}), in particular, for the
functioning of the self-similar fibrillar Gecko's hair(\cite{JB,P,YG}). The irreversible version of the model, which is more adequate for the description of decohesion in conventional engineering
materials, will be presented in a separate paper. For a general comparison of reversible and irreversible fracture see \cite{DT}.

\begin {figure}[h]\vspace{0 cm}
\includegraphics[height=4.5 cm]{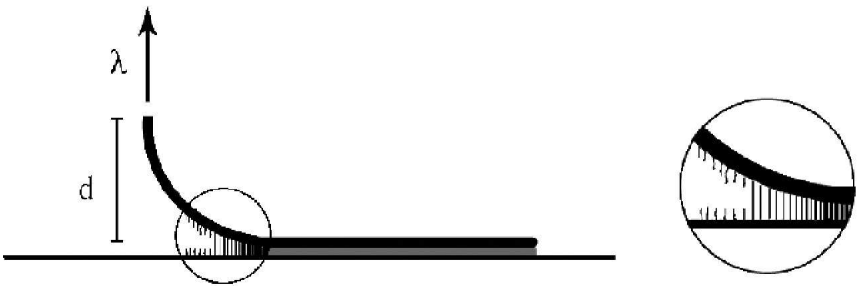}
\caption{\setlength{\baselineskip}{13 pt}{\footnotesize Finite elastic layer attached to a rigid background. The controlling parameter is the displacement $d$ applied at one end, the other end is free.  The measured response is represented by the dependence of the point force $\lambda$ on $d$.}}
\label{schemeo}\end{figure}

According to typical observations the process of
reversible  decohesion due to quasi-static point loading includes three main stages (\cite{Ken} and references therein). First the system
behaves elastically until decohesion begins. Then there is a steady state incremental
evolution of the decohesion front. Finally, at a critical threshold, the system
undergoes a sudden transition to the fully debonded configuration. Similarly, if the
system is unloaded from the fully debonded state, there exists an unloading threshold
beyond which a finite portion of the adhesive layer suddenly reattaches to the
adhesive background. The whole phenomenon is usually hysteretic with different
attachment and detachment thresholds.

In an attempt to reproduce this basic behavior we consider a chain of massless points connected
by harmonic shear springs. The particles are  attached to a
rigid support by breakable elastic links (see Fig.\ref{schemeo}) which mimic,
depending on the  parameters, either direct molecular interactions or
interactions through the fibrillar adhesive layer with internal elasticity. For simplicity we neglect the bending stiffness of the elastic layer
which could be accounted for by adding elastic interaction of next to nearest neighbors \cite{MP,OF}.
We consider a quasi-static transversal point loading of the otherwise free layer in a
hard device and study the rate independent evolution of the emerging debonding
front which can be viewed as a \emph{domain wall} separating bonded and debonded phases \cite{Pey}. The reversal of the front direction represent the switch between
zipping and unzipping.

Similar discrete models of the Frenkel-Kontorova
type have been used previously in the analysis of lattice trapping of cracks in crystals
\cite{KT}, interfacial wetting \cite{Lip} and other phenomena where an on site potential with sublinear growth competes with an elastic coupling of the elements \cite{BK}.   In connection to the duplication and transcription
of the DNA an approach of this type was first proposed by Peyrard and Bishop who applied it to the modeling of equilibrium melting phase transition (see the review \cite{Pey}). Our model can be viewed as a purely mechanical version of the Peyrard-Bishop model where we go beyond global minima of the energy (zero temperature limit) and investigate the structure of the whole energy landscape (see also \cite{Theo}).
Our use of the simplified piece-wise
quadratic approximation for the on site potential allows us to formulate the main results in the analytic form.

We use an incremental energy minimization approach
and solve the finite dimensional variational problem for each value of applied
displacement. Due to the simplicity of the cohesive potential we are able to find all equilibrium
configurations and identify those representing global and local minima of
the energy. Our analysis shows that the local minimizers of the energy always have a single decohesion front.
The metastable configurations, forming separate branches parameterized by
the loading parameter, can abruptly end. The absence of continuity leads to
the necessity of `dynamic snapping' from one branch
to another. While these events may be
 dissipative, they  do not prevent overall reversibility (see also \cite{JB,P, YG}).

We discuss two evolutionary strategies. One strategy assumes  an
overdamped gradient flow dynamics and can be viewed as a vanishing viscosity
limit (maximum delay convention, e.g. \cite{PT7,DT}). The other strategy assumes that the system always remains in the global minimum of the energy (Maxwell convention). For these two models we establish the
existence of the thresholds separating the regime of incremental propagation of the
decohesion front from the regime of a sudden and massive decohesion representing a size effect. When we
restrict the evolution of the system to the global minimization of the energy, the
loading and unloading thresholds coincide. When instead we allow the system to
follow the maximum delay convention and explore the set of marginally stable
configurations, the two thresholds become different. The comparison with
experiments shows that it is the vanishing viscosity solution which reproduces
the observed adhesion hysteresis most faithfully (\cite[Chapter 3]{Ken}).

We then develop a continuum analog of our finite dimensional microscopic model, interpreting it
as a formal $\Gamma$-limit \cite{Bra,DM}. While the limiting functional,
constructed in this way, usually captures only the global minima in the
original discrete problem, in our case it also agrees with a point wise limit and therefore preserves the local
minima. To prove this fact we perform a systematic study of all metastable solutions
of the continuum problem. While the continuum model is much more transparent
mathematically and allows one to obtain the values of
all relevant thresholds in explicit form, the strongly discrete limit remains important for
some applications, in particular, for the modeling of the DNA \cite{Pey}.

The primary goal of our simplified model is to elucidate how macroscopic
responses  depend on the microscopic parameters. The continuum version of the model
contains only one non-dimensional micro-parameter $\nu$ measuring the propensity of the adhesive layer to dynamic snapping. We show
that the degree of localization of the decohesion front increases as $\nu$
decreases which is revealed macroscopically as a transition from quasi-ductile
to quasi-brittle behavior. In the limit $\nu\rightarrow 0 $, which corresponds to the disappearing of an internal length scale, we obtain an explicit expression for the critical debonding force, which is in principle a measurable parameter \cite{Theo}.  In general, we expect that the achieved parametric control in the simplified microscopic setting can be used in the design of the prototypical molecular
devices. We are fully aware, however, that there is long way between the toy models of the type considered in this paper and the realistic description of the particular biological systems (gecko, DNA, etc.)

The paper is organized as follows. In Section~\ref{ca} we introduce our
discrete model and present an analytical description of all stable and
metastable configurations corresponding to a given value of applied displacement. In
Section~\ref{cont} we derive the $\Gamma$-limit of the discrete model and
classify the local minimizers of the limiting problem. In Section~\ref{mdcr1}
we study two different responses of the discrete model to monotonous and cyclic
loading, one overdamped and another nondissipative. In the dissipative case we
compute the associated heat to work ratio and construct the hysteresis loops.
Finally, in Section~\ref{fin} we show how the main features of the
cohesion/decohesion hysteresis vary as one goes from discrete to continuum
description and present the results of the detailed parametric study of the
model. In the Appendices we collect mathematical results of technical nature.

\section{Microscopic model} \label{ca}
\setcounter{equation}{0}

Consider a discrete chain containing $n+1$ points which are connected by linear
elastic springs with reference length $l=L/n$. Each point is also
connected to a rigid substrate by a breakable spring. In this maximally
simplified setting one can deal with two prototypical problems: pull out test
(e.g. \cite{MT}) and pull off or peeling test (e.g. \cite{OF}). For
determinacy, we shall focus on the peeling problem and assume that the points
move orthogonally to the substrate (see Fig.\ref{scheme}).

\begin {figure}[h]\vspace{0.5 cm}
\includegraphics[height=4.5 cm]{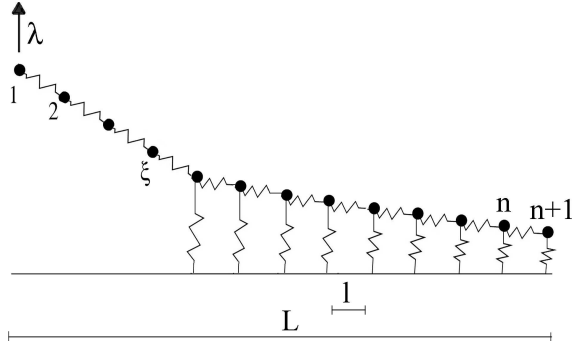}
\caption{\setlength{\baselineskip}{13 pt}{\footnotesize Schematic representation of
the discrete model. The location of the decohesion front is marked by the integer valued parameter $\xi$. In the horizontal direction the particles are separated by the fixed distance $l=L/n$. }} \label{scheme}\end{figure}

Denote by  $u_i$ the vertical displacements of the particles from their reference
positions. The elastic energy of the connecting linear springs can be written as
 \be\displaystyle \phi(\delta_i)=\frac{1}{2} G \, \delta_i^2,
\label{ffa} \ee
 \noindent where $G$ is the ({\it shear}) modulus and
$\delta_i=(u_{i+1}-u_{i})/l.$ For the energy of the breakable springs we assume the
simplest form
 \be \varphi(w_i )=\left \{ \ba{ll} \displaystyle
\frac{1}{2} \, E w_i^2 &\mbox{if}\:\: \vk w_i< 1 \\
\displaystyle \frac{1}{2}  E  &\mbox{if}\:\: w_i\geq 1, \ea \right  . \label{ffc} \ee
\noindent where $E$ is the longitudinal elastic modulus, $w_i=u_i/u_r$ are the
normalized displacements, and $u_r$ is the
 breaking threshold. The total energy of the chain can be written as
 \be \Phi=\frac{L}{n} (\sum_{i=1}^{n+1}  \varphi(w_i)+ \sum_{i=1}^{n}
  \phi (\delta_i)). \label{ener}\ee
We load the chain in a hard device, meaning that the (normalized) displacement $d$ of
the first spring is prescribed
 \be w_1=d>0.\label{bc}\ee

The energy can be rewritten in a more compact form.  To this end we introduce a
vector distinguishing `bonded' and `de-bonded' springs
 \be \chi(i)=\left \{ \ba{l} \displaystyle
0 \hk \mbox{if} \hk w_i <1\vk \\
1 \hk \mbox{if } \hk w_i\geq 1, \ea \right . \label{a}\ee and construct the
$(n+1) \times (n+1)$ diagonal matrix $$\Dz=\mbox{diag}(\chi(1), \ldots,
\chi(n+1)).$$
 In what follows it will be convenient to use the following notations: $\uz\in \real^{n+1}$ - the
displacement vector, $\unoz\in \real^{n+1}$ - the vector with $(\unoz)_i=1$, and $\iz$
- the first vector of the canonical basis in $\real^{n+1}$. We also introduce the
$(n+1) \times (n+1)$ tri-diagonal matrix
 $$
          \Az=\left [ \ba{lllllll}
              1& -1 &  & & & &\mbox{ \large {\bf 0}}\\
              -1& 2 & -1& & & & \\
               & -1 & 2&-1& & & \\
               &   & \ddots&\ddots&\ddots& & \\
               & &   & -1&2&-1& \\
                &  &   &  &-1&2&-1\\
              \mbox{ \large {\bf 0}}& &   &  & & -1 & 1\\
              \ea \right ].
$$
By using the above notations, we can rewrite the dimensionless total energy (\ref{ener}) in
the form
 \be
 \Jd(\uz):= \frac{\Phi}{L E} =\frac{1}{2 n}(
 \Bz \uz \cdot \uz +\xi),\label{cf}\ee
 where
 \be \Bz=\Iz -\Dz
  +n^2\nu^2\, \Az
 \label{B}\ee
 and $\xi$ is the total number of debonded springs. The dimensionless
energy (\ref{cf}) depends on two scaling parameters: $n$ and
 \be \nu=\frac{u_r }{L }\sqrt{\frac{G}{E}}. \label{tau}\ee
The main physical nondimensional parameter of the problem, $\nu$, implicitly
characterizes the toughness of the breakable bonds. In particular, by
decreasing $\nu$ we increase the cohesion energy. The geometrical parameter $n$
is a measure of discreteness and $n \rightarrow \infty$ would mean for us the
`macroscopic' or `continuum'  limit (see Section~\ref{cont}).

To find the equilibrium state of the chain at a given $d$, it is natural to
first minimize the elastic energy at a fixed configuration of debonded springs
$\Dz$. We obtain the following minimization problem
 \be \mbox{Min}\left \{\Jd(\uz)\;\vert \;
 \uz \in {\real}^{n+1},
 \uz \cdot \iz=d
 \right \}. \label{min}\ee
\noindent The necessary conditions of equilibrium can be written as
 \be  \Bz \uz -\sigma n \iz={\mathbf 0},\label{eul2}\ee
\noindent where $$\sigma=\frac{u_r}{EL}\lambda$$ is the Lagrange multiplier, representing
the external force $\lambda$ acting on the first point of the chain. The stability of these
equilibrium configurations is ensured by the positive definiteness of the Hessian
matrix $\Bz$ which immediately follows from (\ref{B}). We can then conclude that all
solutions of (\ref{eul2}) are local minima of the energy.

The linear equations (\ref{eul2}) can be solved formally which allows us to represent
all metastable equilibrium configurations by the formulas

 \be  \sigma=  \frac{d}
 {n\Bz^{-1}\iz \cdot \iz}, \label{es1}\ee

 \be  \uz=
 \frac{\Bz^{-1}\iz}{\Bz^{-1}\iz \cdot \iz}d,
 \label{es2}\ee

 \be \Jd=
\frac{1}{2n}\left (\frac{d^2}{\Bz^{-1}\iz \cdot \iz}
 +\xi\right ).\label{es3}\ee
 \smallskip

We observe that the variables $w_i$ given by {\rm(\ref{es2})} decrease as the
index $i$ increases. This follows from the fact that the column elements in the
inversions of diagonally dominant tri-diagonal matrices necessarily decrease
(see e.g. \cite{Nabb}). Therefore in each metastable configuration it is
necessarily the first $\xi$ springs that are debonded while the remaining
$n+1-\xi$  are bonded. This observation allows one to write the following
analytical representation for the displacement field (see \textbf{Appendix A}
for details)
 \be
 \displaystyle w_i=\left \{
 \ba{ll} \displaystyle d - (i-1)\frac{\sigma}{n\nu^2}, &  i=1,...,\xi,\vk \\
\displaystyle \frac{\cosh [(n+3/2-i)\eta]}{\sinh [(n+1-\xi) \eta] \sinh
[\eta/2]}\frac{\sigma}{2n\nu^2} , & i=\xi+1,...,n+1.\ea \right . \label{disp}\ee Here
 \be
 \displaystyle \sigma= K d \label{lambda}\ee
 is the stress,
\be K= \frac{2n\nu^2}{2\xi-1+\coth \frac{\eta}{2}\coth[(n+1-\xi)\eta]}\ee is
the effective elastic modulus,  and $\eta$ is one of the two solutions of the
equation
 \be
\displaystyle 1+\frac{1}{2n^2 \nu^2}= \cosh [\eta].\label{eta}\ee Since
 the equilibrium properties are represented by even functions of
$\eta$,  the particular choice of the solution in (\ref{eta}) is irrelevant.

The energy of the metastable configurations can be written as
 \be \displaystyle J=\bar J(d,\xi)
 = \frac{1}{2n}(n K
 d^2+\xi).\label{eqene}\ee
In order to be admissible, the configurations (\ref{disp}) must respect the
compatibility condition requiring that all bonded springs have $w_i<1$ and all
debonded springs have $w_i\geq1$. This condition is equivalent to a restriction on
$\xi$. To obtain this restriction we compute the value of the loading parameter
$d_1(\xi)$ corresponding to $w_{\xi+1}=1$ and the value $d_2(\xi)$ corresponding to
$w_{\xi}=1$. We obtain
 \be\ba{ll}
 d_1(\xi)=\displaystyle 1+\frac{2(\xi-1)}
 {\coth[\frac{\eta}{2}]\coth[(n-\xi+1)\eta]+1} ,\vks \\
\displaystyle d_2(\xi)=1+\frac{2\xi}
 {\coth[\frac{\eta}{2}]\coth[(n-\xi+1)\eta]-1} . \hkk
\ea
 \label{br}\ee
We call the interval $[d_1(\xi),d_2(\xi)]$ the {\it stability domain} of a solution
with a given geometry of a crack $\xi$. In general, several crack geometries may be
compatible with a given load $d$. The detailed study of the obtained solutions, in
particular the specialization of the global minimizers, will be postponed till
Section \ref{mdcr1}.

\section{Macroscopic problem}\setcounter{equation}{0}
\label{cont}

In most applications the parameter $n$ is a large number. Therefore it is of interest
to describe the continuum limit of the discrete model formulated in the previous
section. As a first step we can look at the point-wise limits of the discrete
solutions (\ref{disp}) as $n\rightarrow\infty$. To compute these limits we introduce
$$X(i)=(i-1)L/n,$$ the coordinate of the $i$th spring and define the following normalized
variables:

$$\ba{ll} x:=X/L, & \mbox{ normalized spatial coordinate,} \vks \\
\zeta:=\xi/n, & \mbox{fraction of debonded springs.} \ea\vks$$
 \noindent By assuming
that in the limit $n\rightarrow\infty$  the parameter $\zeta$ is finite we obtain
from (\ref{lambda})
 \be d(\zeta)=1+\frac{\zeta}{\nu^2}
 \sigma.\label{limd}\ee
It is also easy to see that
 \be\lim_{n\rightarrow\infty}d{_1}(\zeta)=\lim_{n\rightarrow\infty}d{_2}(\zeta)=
  d(\zeta).\label{dddd}
\ee This means that for each value of $\zeta$ the stability domain shrinks in the
continuum limit to a point. For the limiting  displacement field we obtain \be
w_{\zeta}( x)=\left \{ \ba{ll}\displaystyle
 \displaystyle d-\frac{\sigma}{\nu^2} x &
\mbox{ if } x \in (0,\zeta ), \vks \\
\displaystyle\frac{ \cosh(\frac{1}{\nu} (1-x))}{\cosh (\frac{1}{\nu}(1-\zeta) )} &
\mbox{ if } x \in (\zeta , 1) .\ea \right .\label{ucont}\vks\ee The value of the
continuum energy of a metastable state can now be computed from the formula
 \be
J=\hat J(\zeta)=\frac{1}{2} ( \zeta \,
 (1+\frac{\sigma^2}{\nu^2})+ \sigma).\label{J}\ee
Here we used the limiting relation between the stress and the length of the
debonded region \be \sigma=\nu\tanh \left (\frac{1-\zeta}{\nu}\right
).\label{J1}\ee

To interpret these results correctly, we can independently look at the limit of
the variational problem (\ref{min}) as $n\rightarrow\infty$. To this end we can
define the space of piecewise constant functions on $(0,1)$
$$
\disp\mathcal A_n(0,1)=:\left\{\sum_{i=1}^{n} a_i^n{\bf1}_{[i-1,i)\frac{1}{n}}:
a_i^n\in\mathbb R,\ a_1^n=d\right\}
$$
and rewrite the discrete energy functional (\ref{ener}) in the form
$$
\displaystyle
 J_n(w)=\left\{\begin{array}{ll}&\displaystyle
\frac{1}{n E}\left( \displaystyle\sum_{i=1}^{n+1}\varphi\left(w\left
(\frac{i-1}{n} \right ) \right )+\sum_{i=1}^{n}\phi\left( \frac{
w(\frac{i}{n})-w( \frac{i-1}{n})}{1/n}\frac{u_r}{L}\right)
\right)\quad\hbox{if}\quad w\in \mathcal A_n(0,1),\\&\\ & +\infty\quad
\hbox{otherwise in}\quad L^2(0,1).\\\end{array}\right.$$
Next we can define
${\mathcal A}^*_n$ as the subset of the functions $w\in H^1(0,1)$ such that
there exists $\hat w\in\mathcal A_n$ which satisfies
\begin{equation}
\label{link} w'(x)=\disp\sum_{i=1}^{n}\frac{\hat w\left(\frac{i}{n}\right)-\hat
w\left(\frac{i-1}{n} \right)}{1/n}{\bf1}_{[i-1,i)\frac{1}{n}},\ \ w(0)= \,d.
\end{equation}
 Clearly, $
w({\frac{i}{n}})=\,\hat{w}(\frac{i }{n}) $ and we rewrite the original
functional in the form \be
 J_n(w)=\frac{1}{nE}\displaystyle\sum_{i=1}^{n+1}\varphi\left(w\left
(\frac{i}{n} \right )\right)+\frac{1}{E}\int_0^1 \phi(\frac{u_r}{L}w')\,dx
,\label{J11}\ee
 where now $w\in {\mathcal A}^*_n $. It
is easy to see that all (local and global) minimizers of $J_n$ on $\mathcal A_n$ can
be described in terms of the corresponding minimizers of $J_n$ on ${\mathcal A}^*_n$
which makes the two problems equivalent.

We can now study a point-wise limit of the functional (\ref{J11}). It is
straightforward to see that this finite dimensional variational problem converges as
$n\rightarrow\infty$ to the infinite dimensional problem for the continuum functional
\begin{equation}
\label{link1} J (w)=\left \{ \begin{array}{l}\displaystyle \frac{1}{E }
\int_0^1
(\varphi (w)+\phi(\frac{u_r}{L} w'))\, dx \mbox{ if } w\in \mathcal A(0,1)\vspace{11 pt}\\
\displaystyle +\infty \mbox{ otherwise in } L^2(0,1)\end{array} \right .
\end{equation}
\noindent which is defined on the space $ \mathcal A=\{ w\in H^1(0,1): w(0)=d\}.$
 In $\textbf{Appendix B}$ we prove that the
point-wise convergence automatically implies $\Gamma$- convergence. In
particular, this guarantees that the global minimizers of (\ref{link1}) can be
viewed as the continuum limits of the global minimizers of (\ref{J11}).

The next question concerns the status of the local minimisers of (\ref{link1}).
We say that $w\in \mathcal A$ is a local minimizer of $J$ if there exists
$\delta> 0$ such that for every $v\in \mathcal A$ with $\|w-v\|_{H^1}\le
\delta$ we have $J(w)\le J(v)$. In the important case $d>1$ we can prove
(see $\textbf{Appendix C}$)  that $w \in \mathcal A$ is a local minimizer of
$J$ if and only if it coincides with a solution $w_\zeta$ of the following
system:
\begin{equation}
\label{link11}\left\{
\begin{array}{ll} &  w''= 0\ \ \hbox{in}\ \ (0,\zeta )\\
&\\
&w(0)=d;\ w(\zeta )=1\end{array}\right. \left\{
\begin{array}{ll} &  \nu ^2 w''= \displaystyle w\ \ \hbox{in}\ \ (\zeta ,1)\\
&\\
& w(\zeta )=1,\ w'(1)= 0\end{array}\right. ,
\end{equation} at $\zeta=\bar\zeta$, where $\bar \zeta$ is a
local minimizer of the function $\hat J(\zeta)=J (w_\zeta)$. In the case $d<1$
there is only one minimum given by the solution of the Euler-Lagrange equation
$\nu ^2 w''=w$ in $(0,1)$ with the boundary conditions $w(0)=d$ and $w'(1)=0$.
In the special case $d=1$ there are two solutions, namely, the solution of the
differential equation $\nu ^2 w''=w$ and the homogeneous solution $w=1$ which
means that the detached set can be either empty or coincide with the whole
interval $(0,1)$.

The solution of the linear equations (\ref{link11}) can be computed explicitly. It is
easy to show that they are given exactly by the formulae (\ref{ucont}). This means
that the $n\rightarrow \infty$ limit of the metastable branches of the discrete model
coincides with the metastable solutions of the continuum model. Therefore besides
ensuring convergence of the global minima our pointwise limit also preserves the
local minimizers.

\section{Dynamic strategies}\label{mdcr1}\setcounter{equation}{0}

In the previous sections  we found for each value of the loading parameter $d$  a variety of the accessible metastable configurations. Suppose now that the value of
the loading parameter is changing quasi-statically. Then the choice of a particular
local minimum occupied by the system at each value of the loading parameter is
controlled by dynamics. Of a particular interest are the two evolutionary
strategies. The first one represents the vanishing viscosity limit of the
corresponding viscoelastic (overdamped) problem (e.g. \cite{PT7}). In this case the
system stays in a given metastable state till it becomes unstable ({\it maximum delay
convention}). The second strategy imposes that the system is always in the global
minimum of the energy ({\it Maxwell convention}). This behavior can be viewed as the
zero temperature limit of the hamiltonian (underdamped) dynamics.

\subsection{Viscosity solution}\label{mdcr}

Suppose that the parameter $d$ is monotonically increasing starting from the
value $d=0$ with no debonded springs (point $O$ in Fig.\ref{f3}). The `virgin'
branch becomes unstable when the {\it first} spring  detaches at $w_1=1$.
According to (\ref{br}) the decohesion starts at $d=d_2(0)=1$. The system then
switches to a new metastable branch and we assume that in this new branch the
only first spring, verifying $w_1=1$, detaches whereas all other springs still
remain in the elastic regime ($A-B$ in Fig.\ref{f3}).
 \begin {figure}[hbtp]
\vspace{0cm} \hspace{1.5cm}
\includegraphics[width=10 cm]{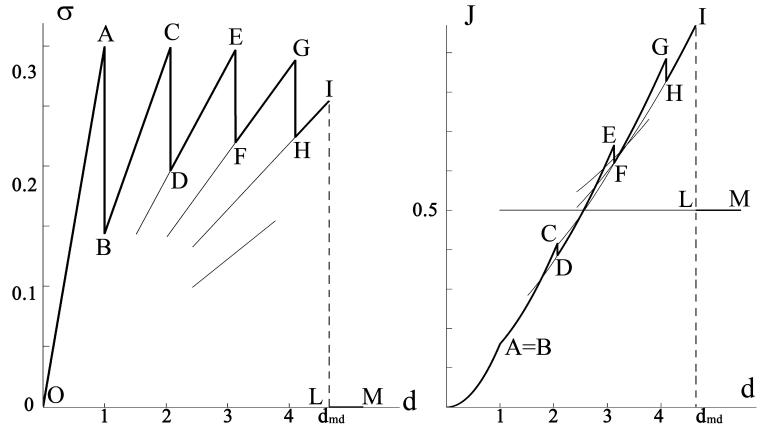}
\vspace{0 cm} \caption{\setlength{\baselineskip}{13 pt}{\label{f3} \footnotesize Overall force-displacement relation for a system with $n=6$ springs and $\nu=0.3$ Viscosity solution (maximum delay strategy)
is indicated by the bold lines.}}
\end{figure} To check that only one spring breaks one has to study
the global energy landscape and determine the steepest descending paths (see
e.g. \cite{PT5}). We observe that in the continuum limit, according with
(\ref{dddd}), a single metastable branch can be associated to each $d$ and this indeterminacy is
automatically overcome. If the
displacement is increased further, the debonding continues as the second spring
reaches the breaking limit at $d=d_2(1)$ ($C$ in Fig.\ref{f3}). This pattern
repeats itself as the subsequent springs debond one at a time. As we see the
system follows a `pinning-depinning' type of evolution with alternating slow
elastic stages and sudden transitions between different metastable branches.
This behavior, with the system switching between the branch with $\xi$ debonded
springs to the branch with $\xi+1$ debonded springs is possible till
$d_{2}(\xi+1)>d_{2}(\xi)$. The numerical solution shown in Fig.\ref{f3} shows
that there exists a value of the external load, $d=d_{md}$, such that for
$d>d_{md}$ the only equilibrium solution is the totally debonded configuration,
i.e. $\xi=n$. Thus, when $d=d_{md}$  all the remaining elastic springs break
simultaneously and the system jumps to the fully debonded configuration. In the
case of infinite $n$, we shall be able to find the value of the limiting
threshold $d_{md}$ analytically (see Section~\ref{cont}).

\subsection{Global minimization}

 To find the global minimum we have
to minimize the energy of the metastable equilibrium states with respect to the
parameter $\xi$. Fig.\ref{f5} shows by bold lines the Maxwell path for the same
system as in the previous section. We
observe the existence of another threshold $d_{Max}<d_{md}$ separating the regime
with a progressive debonding  from the sudden jump to the fully detached
configuration. Overall, the resulting stress-strain path is analogous to the one in
the case of the maximum delay convention. In quantitative terms, the Maxwell loading
path is lower and the transition to the fully debonded configuration is attained at a
lower assigned displacement.
\begin {figure}[hbtp]
\vspace{0cm} \hspace{1.5cm}
\includegraphics[width=11 cm]{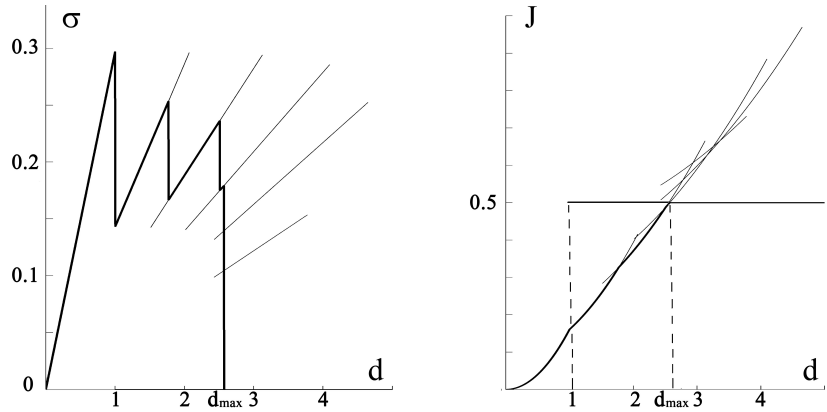}  \vspace{0 cm} \caption{\setlength{\baselineskip}{13
pt}{\footnotesize Overall force-displacement relation for a system with $n=6$ springs and $\nu=0.3$ Global minimum
solution (Maxwell convention) is indicated by the bold lines. Compare with Fig.\ref{f3}.\label{f5}}} \end{figure}

\subsection{Dissipation}
To characterize the dissipation associated with the maximum delay strategy, consider, for
instance, the path b-c-d-e shown in Fig.\ref{f5b}. We denote by $\bar \xi$ the number
of  debonded springs at the starting equilibrium branch b-c. The subsequent branch
d-e will then have $\bar \xi+1$ debonded elements. According to the maximum delay
convention, the system follows the equilibrium branch $\bar \xi$ until it becomes
unstable at $d=d_2(\bar \xi)$ (path b-c). Then it switches to a new branch with a
smaller energy (jump c-d). To find the energy dissipated during this jump event we
need to compare the energies (\ref{eqene}) corresponding to the two branches $\bar
\xi$ and $\bar \xi +1$ at a fixed displacement $d=\bar d$. We can write
 \be\Delta J(\bar d,\bar \xi) :=J_{\bar \xi+1}(\bar d)- J_{\bar \xi}(\bar d)
 =\frac{1}{2n}(n\bar d^{\,2}( K_{\bar \xi+1}-
 K_{\bar \xi})+1).\label{jump}\ee
\noindent The first term in the right hand side represents the difference of the
elastic energies (area inside the triangle O-P-Q in Fig.\ref{f5b}). The second
term represents the cohesive energy accumulated by the system in the transition
between the two states (it does not depend on $\bar \xi$). The released elastic
energy is partially accumulated by the system in the form of cohesion energy and
the rest is dissipated. The dissipation is zero when the energy difference in
(\ref{jump}) vanishes which corresponds to the Maxwell path. {\footnote{Under Maxwell convention the transition to the fully debonded state also  takes place
without dissipation and  at
$d=d_{Max}(\bar \xi_{Max})$  the decohesion energy equals the elastic
energy and  $\bar \xi_{Max} =n  K_{\bar
\xi_{Max}}d^2_{Max}.$} Instead, along the maximum delay path, represented by the points
b-c-d in Fig.\ref{f5b}, the system switches to the new branch in a dissipative way
(jump c-d) and the dissipated energy
 $\Delta J (d_2(\bar \xi),\bar \xi)
$ is equal to the area C-c-D-d. In general, according to the maximum delay convention
the area underneath the stress-strain path, representing the external work, can be
decomposed into the decohesion energy represented in Fig.\ref{f5b} by the equal
triangles of unit area (along the Maxwell
path at each switching event the increment of the decohesion energy has the same magnitude as the increment of elastic energy), the accumulated elastic energy
represented by the dark grey and the dissipated energy represented by the light areas
between the maximum delay path and the Maxwell path.

\begin {figure}[hbtp] \vspace{0cm} \hspace{1.5cm}
\includegraphics[width=8.5 cm]{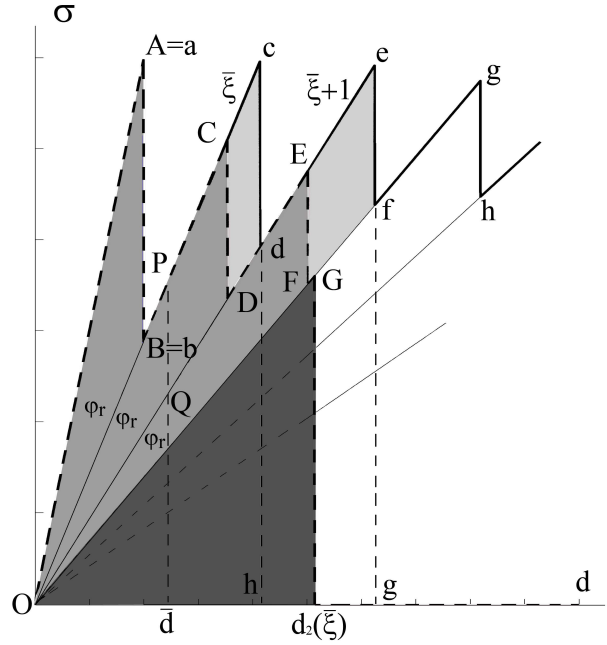} \vspace{-0.2 cm}
\caption{\label{f5b}\footnotesize Decomposition of the external work into the dissipated
energy (light grey areas C-c-D-d, E-e-F-f), the decohesion energy (middle grey area
O-A-B-C-D-E-F-O) and the elastic energy (dark gray area o-f-g-o); the global minimum response (dashed bold lines); the maximum delay response( continuous bold lines). Parameters are the same as in
Fig.\ref{f3}}
\end{figure}

\subsection{Hysteresis}

In Fig.\ref{f5f} and Fig.\ref{f5g} we illustrate the behavior of the system
under cyclic loading.  If the Maxwell convention is operative, there is
no hysteresis and the system follows elastically the same path for loading and
unloading (say, path O-A-B-A-O in Fig.\ref{f5f}$_a$).
\begin {figure}[hbtp] \vspace{0cm} \hspace{1.5cm}
\includegraphics[width=14.5 cm]{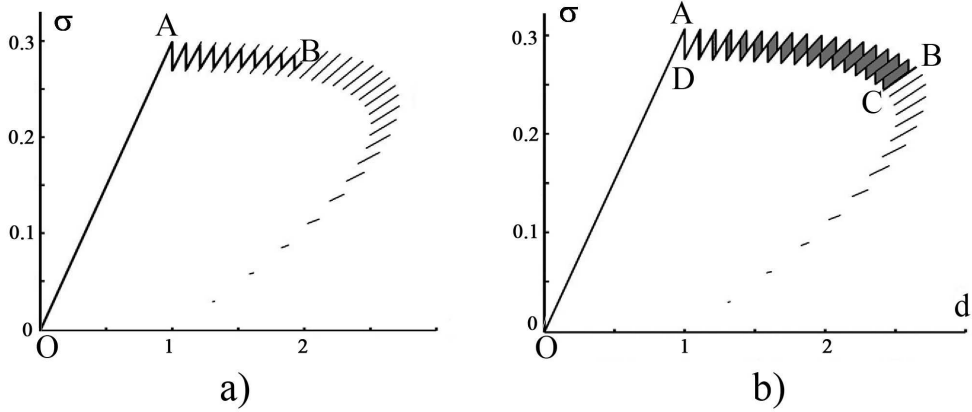} \vspace{0 cm}
\caption{\setlength{\baselineskip}{13 pt}{\footnotesize Partial cyclic loading for a
system with $n=30$, and $\nu=0.3$). a)
Maxwell convention, b) maximum delay convention. \label{f5f}}}
\end{figure}

If the system is unloaded, after the transition to the fully
debonded state ($d>d_{Max}$) the crack heals again at $d=d_{Max}$, with a sudden
transformation of the cohesive energy into elastic energy. During such event a
finite domain of broken springs reconnects simultaneously (path D-C-B in
Fig.\ref{f5g}$_a$).
Such snaps are indeed observed in experiments, both for loading
and unloading (see \cite[Chapter 3]{Ken} and references therein). Experiments show,
however, that the detachment and reattachment thresholds can be different with the
corresponding systems exhibiting an adhesion hysteresis. This suggests that the
Maxwell strategy may be less realistic than the maximum delay strategy.

\begin {figure}[hbtp] \vspace{0.5cm} \hspace{1.5cm}
\includegraphics[width=13.5 cm]{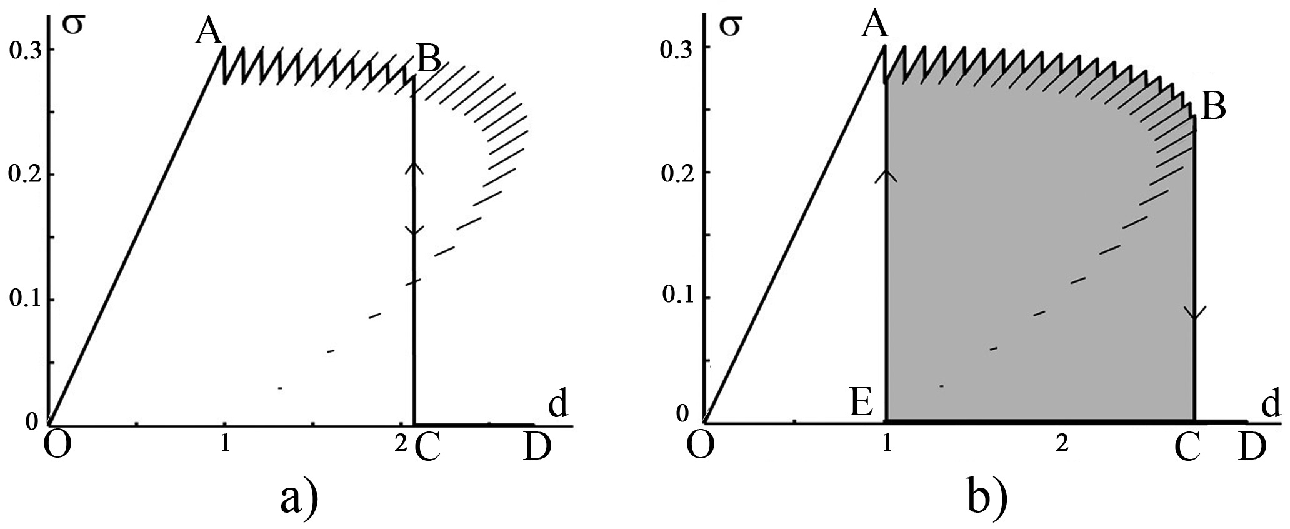} \vspace{0 cm}
\caption{\setlength{\baselineskip}{13 pt}{\footnotesize Complete cyclic loading for a
system with $n=30$, $\nu=0.3$.
a) Maxwell convention, b) maximum delay convention. \label{f5g}}}
\end{figure}

Under the maximum delay convention, if the unloading starts before the system
reached the fully debonded state ($d<d_{md}$), the system shows a limited
hysteresis (loop O-A-B-C-D-A in Fig.\ref{f5f}$_b$) which disappears in the
continuum limit.  If we unload the chain inside this hysteresis loop (from,
say, a branch $\bar\xi$), the system first deforms elastically until
$d=d_1(\bar\xi)$ when the last broken spring reconnects and the system jumps
back to the branch $\bar \xi-1$. With further unloading the system follows this
new equilibrium branch until again at $d=d_1(\bar \xi-1)$ another springs
reconnects and so on.

In Fig.\ref{f5g}$_b$ we illustrate the behavior of the dissipative system during the
complete unloading from the fully debonded state. We remark that in contrast to the
case of small cycle unloading  the maximal hysteresis is preserved in the macroscopic
limit.

\section{Continuum behavior}\label{fin}\setcounter{equation}{0}

We now turn to the study of the continuum solutions (\ref{ucont}). Using
(\ref{limd}) one can see that there exists a critical value $d_{md}$ such that
for $1<d<d_{md}$ the function $\hat J(\zeta) $ from (\ref{J}) has two
non-degenerate critical points where $\hat J'(\zeta)=0$ (one stable and one
unstable) while for $d>d_{md}$ there are no such critical points. \begin
{figure}[hbtp] \vspace{0cm}
\includegraphics[width=8 cm]{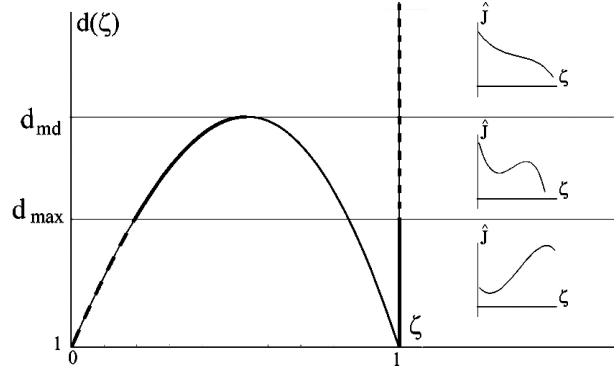} \vspace{0 cm}
\caption{\setlength{\baselineskip}{13 pt}{\footnotesize Phase diagram for the
continuum model. Bold lines represent the maximum delay convention;
bold-dashed lines represent the global minimization strategy (Maxwell
convention). Inserts show the structure of the function $\hat J(\zeta)$ in the
corresponding intervals. Here $\nu=1$.\label{JJ}}}
\end{figure}
\begin {figure}[hbtp]
\vspace{0cm}  \includegraphics[width=12.5 cm]{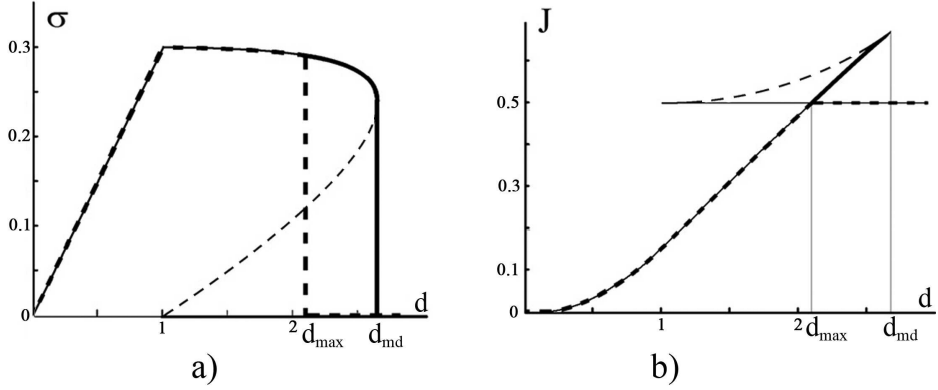} \vspace{0 cm}
\caption{\setlength{\baselineskip}{13 pt}{\footnotesize Equilibrium force and energy
for the continuum analog of the system considered in Fig.\ref{f5f}. Bold lines
represent the maximum delay convention; bold-dashed lines - the global minimization
strategy (Maxwell convention); bold continuous lines correspond to unstable equilibria.\label{f10}}}\end{figure}

One can also check that for $d>1$ the derivative $\hat J'(0)=\hat
J'(1)=1-d^2<0$, which means that the function $\hat J(\zeta)$ behaves as shown
in the inserts in Fig.\ref{JJ}. Notice also that there exists another threshold
$d_{max}$ such that for $d<d_{max}$ the global minimum is attained at the first
of the two critical points, whereas for $d_{max}<d<d_{md}$ the global minimum
is attained at the boundary of the domain, $\zeta=0$, describing the totally
debonded configuration. Moreover, this state remains the only minimizer for the
whole interval $d>d_{md}$. In Fig.\ref{f10} we show the stress-strain and
energy-strain diagrams illustrating this behavior of the continuum solutions.

The critical value of displacement $d_{md}$ can be obtained from the equation $
d'(\zeta)=0$ (see Fig.\ref{JJ}), which gives the fraction $\zeta_{md}$ of
debonded springs at $d=d_{md}$. We can  write explicitly  $$
\zeta_{md}\frac{\sigma^2(\zeta_{md})}{\nu^2}+ \sigma(\zeta_{md})-
\zeta_{md}=0.$$ After solving this equation, we can use (\ref{limd}) to find
$d_{md}= d(\zeta_{md}).$ The displacement $d_{max}$ can be obtained by first
determining the  fraction of debonded springs $\zeta_{max}$ which satisfy $\hat
J(\zeta_{max})=\hat J(1)$ or $$\zeta_{max} \frac{\sigma^2(\zeta_{max})}{\nu^2}+
\sigma(\zeta_{max})-(1-\zeta_{max})=0.$$ Then, using (\ref{limd}), one can find
$d_{max} = d(\zeta_{max}).$

The overall comparison of Fig. \ref{f5g} and Fig. \ref{f10} shows that the
discrete system has a much richer set of metastable states (local minima) than
its continuum analog. As $n$ increases, we observe two major tendencies: some
of the branches of the local minima of the discrete system shrink to points
representing the local minima of the continuum system whereas some other
branches simply disappear. On the contrary, the structure of the
global minimum path remains basically unaffected as $n\rightarrow\infty$.
\begin {figure}[hbtp]
\vspace{0cm}  \includegraphics[width=14 cm ]{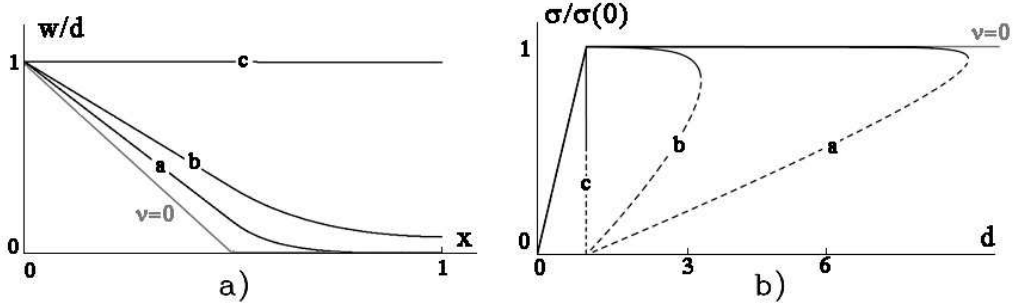} \vspace{0 cm}
\caption{\setlength{\baselineskip}{13 pt}{\footnotesize a) Displacement fields
at $\zeta=0.5$ and b) force-displacement diagrams: (a) $\nu=0.1$,  (b)
$\nu=0.25$, (c) $\nu=10$. Dashed lines in b) correspond to unstable equilibrium
configurations, the horizontal plateaux represents the thermodynamic limit
$\nu=0$.  \label{f14}}}
\end{figure}
\begin {figure}[hbtp]
\vspace{0cm}  \includegraphics[width=14 cm ]{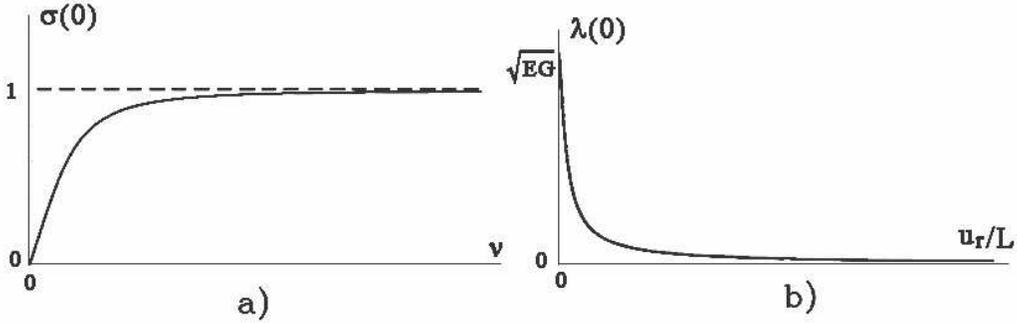} \vspace{0 cm}
\caption{\setlength{\baselineskip}{14 pt}{\footnotesize a) Normalization factor
 $\sigma(0)$ as a function of $\nu$.
b) Debonding force $\lambda(0)$ as a function of $u_r/L$. \label{f15}}}
\end{figure}

Due to the relative simplicity of the continuum model, one can study the
dependence of the response on the remaining nondimensional parameter
$\nu$ characterizing the toughness of the adhesion layer.  In Fig.\ref{f14}$_a$
we show three displacement fields corresponding to a given size of the crack
($\zeta=0.5$) and to different values of the nondimensional parameter $\nu$. We
observe that the `localization' of the crack tip predicted by our model
increases as $\nu$ decreases. In Fig.\ref{f14}$_b$ we represent the
 force-displacement diagrams generated by the continuum model at
different values of $\nu$. Here the force is normalized by
the debonding threshold, corresponding to $\zeta=0$
  $$\sigma(0)=\nu \tanh{\frac{1}{\nu}}.$$
As the parameter
$\nu$ decreases we observe an interesting evolution of the stress-strain
response from an instantaneous (brittle) debonding of the whole chain to a
(plasticity type) plateaux, in the case of a localized tip. In particular, one can see that the ductility of the system,
represented by the overall size of the adhesion hysteresis, grows as the
parameter $\nu$ decreases.

For the continuum system ($n=\infty$) the limit $\nu \rightarrow 0 $  can also be interpreted as a ``thermodynamic" limit because simultaneously $u_r/L \rightarrow 0$ and
$l/L \rightarrow 0$. In this case the normalization load vanishes, $$\nu \tanh{\frac{1}{\nu}}\rightarrow 0$$ however, the limit of the actual critical debonding force remains finite
$$\lambda(0)=(L/u_r)E\sigma(0)=\sqrt{EG}\tanh \left ( \sqrt{\frac
{E}{G}\frac{L}{u_r}} \right ). $$ In particular, the value of the critical force can be computed explicitly  $$\lambda_c=\sqrt{EG}.$$ In the context of the DNA denaturation, where our equilibrium metastable solutions represent domain walls connecting bonded and debonded states, the value $\lambda_c$ corresponds to the zero temperature unzipping threshold \cite{Theo}.

\section{Acknowledgements}
The work of G.P. was supported by the Progetto Strategico, Regione Puglia:
``Metodologie innovative per la modellazione e la sperimentazione sui materiali
e sulle strutture, finalizzate all'avanzamento dei sistemi produttivi nel
settore dell'Ingegneria Civile". The work of L.T.   was supported by the EU
contract MRTN-CT-2004-505226.

\section{Appendix A}

To invert the tri-diagonal matrix $\Bz_e$ in (\ref{B}) we can use iterative formulas
from \cite{Nabb}. We first relabel displacements as follows
 \be
\ba{lll} u_i:=w_i, & i=1,...,\xi, & \mbox{displacements of debonded springs} \\
v_j:=w_{j+\xi}, & j=1,...,n+1-\xi, & \mbox{displacements of elastic springs} \ea
\label{ef} \ee
 \noindent and define the vectors $\uuz = (u_i)$ and  $\vz =(v_j)$.
 Consider the first $\xi$ equations (\ref{eul2})
  corresponding to the debonded part
of the chain. After rearrangement, these equations can be rewritten as
 \be  \Ba \uuz\!=n\nu\!\left[ \!\!\ba{ccccccc}
    \!\! 2&\!\! -1 & \!\! &\!\! &\!\! & \!\!&\!\!\mbox{ \large {\bf 0}}\\
  \!\!  -1& \!\!2 & \!\!-1&\!\! & \!\!&\!\! &\!\! \\
  \!\!  &  \!\! & \!\!\ddots&\!\!\ddots&\!\!\ddots&\!\! &\!\! \\
    \!\!& \!\!& \!\!  &\!\! -1&\!\!2&\!\!-1&\!\!\\
   \!\! \mbox{ \large {\bf 0}}&\!\! &\!\!   &\!\!  &\!\! &\!\! -1 &\!\! 2\\
    \ea\! \right ]\!\!
    \left [ \ba{c}  u_1 \\
    u_2 \\ \ddots \\ u_{\xi-1} \\ u_{\xi} \ea \right ]\!\! =\!\!
    \left [  \ba{c}  \frac{\sigma}{\nu} + n\nu\,d \\
    0 \\ \ddots \\ 0 \\n\nu v_1 \ea \right ] \,,
 \label{fract}\ee
\noindent where we introduced the $\xi \times \xi$ matrix $\Ba$ and added $d$
to both sides of the first equation. The parameter $v_1$ is the
deformation of the first bonded spring. Observe that $\Ba$ is a Toeplitz tri-diagonal
matrix which can be inverted explicitly (see e.g. \cite{HC,Nabb})

 $$(\Ba^{-1})_{ij}=\frac{1}{\nu}\frac{(i+j-|j-i|)(2\xi+2-|j-i|-i-j)}
 {4n (\xi+1)}.$$
\noindent Since in the right hand side of (\ref{fract}) only the first and the last
elements are different from zero, we are interested only in
 $$(\Ba^{-1})_{i 1}=(\frac{1}{n \nu}-\frac{i}{n \nu (\xi+1)}), \hkkk
 (\Ba^{-1})_{i \xi}= \frac{i}{n \nu(\xi+1)}.$$
\noindent Using the first equation of (\ref{fract}) and (\ref{bc}) we obtain
 \be \sigma=n\nu^2\, \frac{d -v_1}{\xi}. \label{bcc}\ee
\noindent The remaining equations give
 \be u_i=d - (i-1)\frac{\sigma}{n \nu^2} , \hkkk i=1,...,\xi.
 \label{fp}\ee

Similarly,  we can reformulate the remaining $n+1-\xi$ equations corresponding to the
bonded part of the chain in the form

$$\Bb \vz\!=\left[\!\! \ba{ccccccc}
     2+\frac{1} {n^2\nu^2}&\!\! -1 &\!\!  &\!\! &\!\! &\!\! &\mbox{ \large {\bf 0}}\\
    -1&\!\! \!\!2+\frac{1} {n^2\nu^2}&\!\! -1&\!\! &\!\! &\!\! & \\
    &\!\!   &\!\! \ddots&\!\!\ddots&\!\!\ddots&\!\! & \\
    &\!\! &\!\!   &\!\! -1&\!\!2+\frac{1} {n^2\nu^2}&\!\! -1& \\
    \mbox{ \large {\bf 0}}& \!\!&  \!\! & \!\! & \!\!&\!\! -1 &\!\! 2+\frac{1} {n^2\nu^2}\\
    \ea\!\! \right ]
 \!\!  \left [ \ba{c} \!\! \!v_1\! \\
   \! v_2 \! \\ \!\!\ddots \\\!\!   \\\!\! \!v_{n+1-\xi}\! \ea\!\! \right ]\! =
  \!\!  \left [  \ba{c} \!\!\!  d-(\xi-1)\frac{1}{n\nu^2}\sigma\!\! \\
    0 \\ \ddots \\ 0 \\ \!\! \!\!  v_{n+1-\xi} \ea \!\!\right ],
 $$
 \smallskip
\noindent where we introduced the $(n+1-\xi) \times (n+1-\xi)$ matrix $\Bb$. Once
again we have a tri-diagonal Toeplitz matrix and since the diagonal elements satisfy
$(\Bb)_{i i}>2$, $i=1,...,n+1   -\xi$, we can write (see again \cite{HC})
 $$(\Bb^{-1})_{i j}=\frac{\cosh[(n-\xi+2-|j-i|)\eta]-\cosh
 [(n-\xi+2-i-j)\eta]}{2  \sinh [\eta] \sinh [(n-\xi+2)
 \eta]}.$$
\noindent The parameter $\eta$ is given by the equation (\ref{eta}) and the results do
not depend on the choice of one of the two solutions of this equation (indeed the equilibrium solutions are even functions of $\eta$).
As in the previous case, we need only the first and the last columns of the inverse
matrix
 $$(\Bb^{-1})_{ i 1}=\frac{\sinh[(n-\xi+2-i)\eta]}
 { \sinh [(n-\xi+2)\eta]},$$
$$(\Bb^{-1})_{ i  (n-\xi+1)}=\frac{\sinh [i\, \eta]}
 {\sinh [(n-\xi+2)\eta]}.$$
\noindent By using these formulas we obtain
 \be\ba{lll} v_i&=&\displaystyle \frac{\sinh[(n+2-\xi-i)\eta]}
 {  \sinh [(n+2-\xi)\eta]}( d -
  \frac{1}{\nu^2}\frac{(\xi-1)\sigma}{n})+ \vks\\&+&\displaystyle
 \frac{\sinh [i\eta]}
 { \sinh [(n+2-\xi)\eta]}v_{n+1-\xi}, \hkkk \hkkk  i=1,...,n+1-\xi.\ea
 \label{tre}\ee

\section{Appendix B}\setcounter{equation}{0}

Here we prove that the energy $J(w)$ given by (\ref{link1}) is actually the
$\Gamma$-limit (with respect to the convergence in $L^2(0,1)$) of $J_n$ given
by (\ref{J11}) as $n\rightarrow \infty$.

To prove the $\Gamma$-convergence we proceed in several steps. The first step
is to find a tight lower bound. The following reasoning is standard (see
\cite{DM}).

\begin{prop}
\label{proposemi} Assume that $w_n\in {\mathcal A}^*_n$ and $J_n(w_n)\le C$ for
every $n\in \mathbb N$. Then up to subsequences $w_n\to w$ weakly in $H^1(0,1)$
and $w\in \mathcal A$.
 Moreover if $w_n\to w$ weakly in $H^1(0,1)$ and $w\in \mathcal A$ then

 \begin{equation}
 \label{semi}
 \displaystyle\liminf_{n\to\infty} J_n(w_n)\ge J(w).
 \end{equation}
 \end{prop}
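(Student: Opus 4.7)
The plan is to handle the two natural pieces of $J_n$ separately: the integral in $w'$ is a standard convex functional and will be treated by weak lower semicontinuity, while the pointwise sum over the on-site potential will be handled by upgrading the weak $H^1$ convergence to uniform convergence via the one-dimensional compact Sobolev embedding.

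First I would extract compactness. The bound $J_n(w_n)\le C$ immediately gives $\int_0^1 \phi(\tfrac{u_r}{L}w_n')\,dx \le CE$, and since $\phi(\delta)=\tfrac{1}{2}G\delta^2$ is quadratic and coercive, this produces a uniform $L^2$-bound on $w_n'$. Combined with the Dirichlet condition $w_n(0)=d$ built into $\mathcal{A}^*_n$ through (\ref{link}), this gives a uniform $H^1$-bound. Extracting a subsequence with $w_n \rightharpoonup w$ weakly in $H^1(0,1)$, the compact embedding $H^1(0,1)\hookrightarrow C([0,1])$ then upgrades this to uniform convergence, so $w(0)=\lim_n w_n(0)=d$, which certifies $w\in \mathcal{A}$.

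For the liminf inequality, the elastic part is routine: the functional $v\mapsto \int_0^1 \phi(\tfrac{u_r}{L}v')\,dx$ is convex in $v'$ and continuous in the strong $H^1$ topology, hence sequentially weakly lower semicontinuous, giving
\begin{equation*}
\liminf_{n\to\infty}\frac{1}{E}\int_0^1 \phi(\tfrac{u_r}{L}w_n')\,dx \ge \frac{1}{E}\int_0^1 \phi(\tfrac{u_r}{L}w')\,dx.
\end{equation*}
The delicate piece is the cohesive sum $\frac{1}{nE}\sum_{i=1}^{n+1}\varphi(w_n(i/n))$, which is a pointwise evaluation rather than an integral. Here I would use two facts: $\varphi$ is globally bounded and in fact continuous on $\R$, since the two branches in (\ref{ffc}) match at $w=1$ with common value $E/2$; and the uniform convergence $w_n\to w$ established above. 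Together these give uniform convergence $\varphi\circ w_n \to \varphi\circ w$ on $[0,1]$, and since $\varphi\circ w$ is continuous, a standard Riemann-sum argument yields
\begin{equation*}
\lim_{n\to\infty}\frac{1}{nE}\sum_{i=1}^{n+1}\varphi(w_n(i/n))=\frac{1}{E}\int_0^1 \varphi(w)\,dx.
\end{equation*}
Adding the two estimates then gives $\liminf_n J_n(w_n)\ge J(w)$.

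The main obstacle I anticipate is the discrete nature of the cohesive term: a priori a pointwise sum need not converge under weak $H^1$ convergence alone, and the saving fact is the dimension-one compact embedding into $C^0$, together with the (fortunate) continuity of $\varphi$ at the breaking threshold. Had $\varphi$ been genuinely discontinuous at $w=1$, only a liminf inequality would be available and one would need to argue more carefully with the level set $\{w=1\}$, which would also complicate the eventual agreement of the $\Gamma$-limit with the pointwise limit asserted later in Section~\ref{cont}.
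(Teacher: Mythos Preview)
Your proof is correct and follows essentially the same two-step decomposition as the paper: weak lower semicontinuity of the convex gradient term, and convergence of the Riemann sums for the on-site term. The only cosmetic difference is that the paper invokes strong $L^p$ convergence of $w_n$ to justify the Riemann-sum limit, whereas you upgrade to uniform convergence via the one-dimensional compact embedding $H^1(0,1)\hookrightarrow C([0,1])$; your route is in fact the cleaner justification, since pointwise sums of the form $\tfrac{1}{n}\sum_i \varphi(w_n(i/n))$ are not controlled by $L^p$ convergence alone without precisely the $C^0$ control you make explicit.
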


 \begin{proof} The first assertion is trivial since
 $J_n(w_n)\le C$ implies
 $\|w'_n\|_{L^2}\le C'$
  which together with $w_n(0)=d$ yields weak
 compactness of $w_n$ in $H^1(0,1)$ and that any limit point of $w_n$
 belongs to $\mathcal A$. If in addition $w_n\to w$ weakly in
 $H^1(0,1)$ and $w\in \mathcal A$, then the convexity of $\phi$
 yields
  \[\displaystyle\liminf_{n\to\infty}\int_0^1\phi(\frac{u_r}{L}w'_n)\,dx\ge
  \int_0^1\phi(\frac{u_r}{L}w')\,dx.\]
  Moreover by recalling that $w_n\to w$ in each $L^p(0,1)$ and that
  \[\displaystyle\sum_{i=1}^{n+1}{\frac {1}{ n}}
  \varphi\left(w_n\left ({\frac{i } {n}}\right )\right)\]
  are the Riemann sums of the function
  $\varphi(w)$ we get\[\displaystyle\sum_{i=1}^{n}{\frac{1}{ n}}
  \varphi\left(w_n\left ({\frac {i} {n}}\right )\right)\to\int_0^1\varphi(w)\,dx\]
  thus proving the inequality (\ref{semi}).\end{proof}
  \noindent The next step is to prove the existence of a recovery sequence.
  \begin{prop}
  \label{recoprop}
  Assume that $w\in\mathcal A$.
  Then there exists a sequence $w_n\in\mathcal A^*_n$ such that $w_n\to w$ weakly in $H^1$ and
  \begin{equation}
  \label{recovery}J_n(w_n)\to J(w).
  \end{equation}
  \end{prop}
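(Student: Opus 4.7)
The plan is to build the recovery sequence by piecewise linear nodal interpolation. Given $w\in\mathcal A$, use the embedding $H^1(0,1)\subset C^0[0,1]$ to evaluate $w$ pointwise, and let $w_n$ be the unique continuous function that is affine on each subinterval $[(i-1)/n, i/n]$ and satisfies $w_n(i/n)=w(i/n)$ for $i=0,1,\ldots,n$. Then $w_n(0)=d$ and $w_n'$ is piecewise constant on the grid, so $w_n\in\mathcal A_n^\ast$. The key structural fact is that on each grid cell $w_n'$ equals the mean of $w'$; equivalently, the map $w\mapsto w_n'$ is the $L^2$-orthogonal projection from $L^2(0,1)$ onto the subspace of functions piecewise constant on the grid.

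First I would establish strong $H^1$ convergence $w_n\to w$. From the projection interpretation, $\|w_n'\|_{L^2}\le\|w'\|_{L^2}$ and $w_n'\to w'$ strongly in $L^2$ by density of piecewise constants in $L^2$; combined with $w_n(0)=w(0)$ this yields strong $H^1$ convergence, and via the Sobolev embedding also uniform convergence of $w_n$ to $w$ on $[0,1]$.

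Next I would verify $J_n(w_n)\to J(w)$ term by term. The elastic part $\frac{1}{E}\int_0^1\phi(\frac{u_r}{L}w_n')\,dx=\frac{Gu_r^2}{2EL^2}\|w_n'\|_{L^2}^2$ converges to $\frac{Gu_r^2}{2EL^2}\|w'\|_{L^2}^2$ by strong $L^2$ convergence of the derivatives, since $\phi$ is purely quadratic. For the cohesive part, the useful observation is that $\varphi$ is globally \emph{continuous} on $\mathbb R$ (the two branches of its definition agree at $w=1$, both giving $E/2$). Hence $\varphi\circ w$ is continuous on $[0,1]$, and since $w_n(i/n)=w(i/n)$ at every node, the expression $\frac{1}{nE}\sum_{i=1}^{n+1}\varphi(w(i/n))$ is, up to a single boundary term of order $1/n$, a Riemann sum for $\frac{1}{E}\int_0^1\varphi(w)\,dx$; the boundary term vanishes because $w\in L^\infty(0,1)$ makes $\varphi\circ w$ bounded.

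The main obstacle I anticipate is the strong $H^1$ convergence step for generic, non-smooth $w\in H^1$: it is classical but requires the density argument for the projection to close. A cleaner alternative is to carry out the recovery first for $w\in C^1\cap\mathcal A$, where the interpolation error is controlled by an elementary Taylor estimate, and then extend to general $w$ by a diagonal extraction exploiting the density of smooth functions in $\mathcal A$ together with the continuity of $J$ under strong $H^1$ convergence. With either route, what remains is the purely quantitative Riemann-sum approximation of a bounded continuous nonlinearity, which is routine.
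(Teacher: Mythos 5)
Your construction is exactly the paper's: the recovery sequence is the piecewise affine nodal interpolant of $w$ (i.e.\ the element of $\mathcal A^*_n$ defined via (\ref{link})), and your term-by-term verification merely spells out what the paper compresses into ``readily seen by convexity''. The argument is correct — indeed you prove slightly more than needed (strong rather than weak $H^1$ convergence), and your Jensen/projection observation $\|w_n'\|_{L^2}\le\|w'\|_{L^2}$ is precisely the convexity the paper alludes to.
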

  \begin{proof} Let $w\in\mathcal A$ and define $w_n\in\mathcal A^*_n$ as in (\ref{link}).
  It is readily seen by convexity that $J_n(w_n)\to J(w)$.
  \end{proof}
\noindent Previous statements prove that $\displaystyle
J=\Gamma\!\!-\!\!\!\lim_{n\rightarrow \infty} J_n$. The relationship between
the minimization problems concerning the functionals $ J_n$ and $ J$ is
clarified in the next theorem.

\begin{theorem}
Let $\bar w_n\in\mathcal A^*_n$ such that
\begin{equation}
\label{minimizing} J_n(\bar w_n)-\inf_{\mathcal A^*_n} J_n\to 0,
\end{equation}

\noindent then up to subsequences $\bar w_n \to \bar w$ weakly in $H^1(0,1)$
and
$$J_n(\bar w_n)\to J(\bar w)=\inf_{\mathcal A} J.$$
\end{theorem}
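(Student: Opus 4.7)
The plan is to combine the two Propositions (lower-bound semicontinuity and the existence of a recovery sequence) in the standard $\Gamma$-convergence fashion, after first establishing equicoercivity of the minimizing sequence. I will carry this out in four steps.

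First, I would prove that $\sup_n J_n(\bar w_n)<+\infty$. Because $\mathcal A$ is nonempty (for instance the affine function $x\mapsto d$ belongs to it), Proposition on recovery sequences applied to any fixed $w^\ast\in\mathcal A$ produces a sequence $w_n^\ast\in\mathcal A^\ast_n$ with $J_n(w_n^\ast)\to J(w^\ast)<+\infty$. Hence $\inf_{\mathcal A^\ast_n}J_n$ is bounded above uniformly in $n$, and the hypothesis \eqref{minimizing} then forces $J_n(\bar w_n)\le C$ for all $n$ large. By the first assertion of Proposition~\ref{proposemi}, this gives weak $H^1(0,1)$ compactness and a (non-relabelled) subsequence with $\bar w_n\rightharpoonup \bar w$ in $H^1(0,1)$ for some $\bar w\in\mathcal A$.

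Second, I would identify $\bar w$ as a minimizer of $J$ over $\mathcal A$. Fix an arbitrary $v\in\mathcal A$. Apply Proposition~\ref{recoprop} to obtain $v_n\in\mathcal A^\ast_n$ with $J_n(v_n)\to J(v)$. Since $\bar w_n$ is almost-minimizing, $J_n(\bar w_n)\le J_n(v_n)+\eps_n$ with $\eps_n\to 0$, whence $\limsup_n J_n(\bar w_n)\le J(v)$. On the other hand, Proposition~\ref{proposemi} yields $\liminf_n J_n(\bar w_n)\ge J(\bar w)$. Combining these two inequalities gives $J(\bar w)\le J(v)$ for every $v\in\mathcal A$, so $\bar w$ realizes $\inf_{\mathcal A}J$.

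Third, I would upgrade these inequalities to convergence of the energies. Taking $v=\bar w$ in the upper-bound step above (i.e.\ using a recovery sequence for $\bar w$ itself) yields $\limsup_n J_n(\bar w_n)\le J(\bar w)$, and the liminf inequality from Proposition~\ref{proposemi} gives the reverse bound. Therefore $J_n(\bar w_n)\to J(\bar w)=\inf_{\mathcal A}J$, and by the almost-minimizing property also $\inf_{\mathcal A^\ast_n}J_n\to \inf_{\mathcal A}J$.

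The only mildly delicate point is the equicoercivity used in step one: one must rule out that $\inf J_n$ could blow up or that $\bar w_n$ could escape bounded sets of $H^1$. The cutoff structure of $\varphi$ (bounded once $w\ge 1$) means the on-site part cannot blow up, and the quadratic gradient term $\phi(\tfrac{u_r}{L}w')$ in $J$ gives a uniform $L^2$-bound on $w_n'$; together with the pinning condition $w_n(0)=d$ built into $\mathcal A^\ast_n$, this yields the needed Poincaré-type $H^1$-bound, as already exploited in the proof of Proposition~\ref{proposemi}. Once that is in place, the rest is the standard ``$\Gamma$-convergence plus equicoercivity implies convergence of minimizers'' machinery recorded in \cite{DM,Bra}.
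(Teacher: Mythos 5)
Your proof is correct and follows essentially the same route as the paper: compactness plus the liminf inequality of Proposition \ref{proposemi} combined with the recovery sequences of Proposition \ref{recoprop}, in the standard ``$\Gamma$-convergence plus equicoercivity'' pattern. The only difference is cosmetic: you explicitly justify the uniform bound $J_n(\bar w_n)\le C$ by noting that a recovery sequence for any fixed element of $\mathcal A$ bounds $\inf_{\mathcal A^*_n}J_n$ from above, a point the paper passes over with ``it is readily seen.''
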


\begin{proof}
It is readily seen that (\ref{minimizing}) yields  $J_n(\bar w_n)\le C$ for
suitable $C>0$ and by Proposition \ref{proposemi} we get, up to subsequences,
$\bar w_n \to \bar w$ weakly in $H^1(0,1)$ and
\begin{equation}
\label{semi2} \displaystyle\liminf_{n\to\infty} J_n(\bar w_n)\ge J(\bar w).
\end{equation}
Let now $w\in\mathcal A$. Then by Proposition \ref{recoprop} there exists a
sequence $w_n\in \mathcal A^*_n$ such that $J_n(w_n)\to J(w)$ and $w_n\to w$
weakly in $H^1$. Then either $J_n(w_n)\ge J_n(\bar w_n)$  or\[\displaystyle
J_n( w_n)-\inf_{\mathcal A^*_n} J_n\to 0\] and so
\[\displaystyle \,J(w)\ge \liminf_{n\to\infty} J_n( w_n)\ge\liminf_{n\to\infty} J_n(\bar w_n)\ge J(\bar w)\]that is $J(\bar w)=\min J$ and $\inf_{\mathcal A^*_n}J_n\to \min J$.

\end{proof}

\section{Appendix C}\setcounter{equation}{0}

\noindent{ We recall that $w\in \mathcal A$ is a local minimizer of $J$ if
there exists $\delta> 0$ such that for every $v\in \mathcal A$ with
$\|w-v\|_{H^1}\le \delta$ we have $J(w)\le J(v)$.

 We first show the following result.

 \begin{theorem} If  $w$
is a local minimizer of $J$ then:\begin{enumerate}
\item If $d< 1$ then
$\{w> 1\}=\emptyset$;\\
\item If $d> 1$ then either
$\{w> 1\}=[0,1]$ or  $\{w> 1\}=(0,\zeta ]$ with $\zeta\in (0,1)$.\\
\end{enumerate}\label{TTT}
\end{theorem}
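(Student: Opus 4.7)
The plan is to classify the connected components of the relatively open set $\{w>1\}\subset[0,1]$; openness is automatic from the embedding $H^1(0,1)\hookrightarrow C([0,1])$. The key claim, which I prove first, is that any component whose closure is disjoint from the Dirichlet endpoint $\{0\}$ can be killed by an $H^1$-small competitor that strictly lowers $J$, contradicting local minimality. Once this is established, both assertions of the theorem reduce to bookkeeping of the remaining components.

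The natural candidate $\min(w,1)$ does not work: the on-site density $\varphi$ is flat above $1$, so truncation gives no cohesive gain, and when $w\gg 1$ the truncation is not $H^1$-close to $w$. The correct competitor on a component $I$ with $\inf I>0$ is the linear interpolation $v_t=(1-t)w+t$ on $I$ and $v_t=w$ outside $I$, for $t\in(0,1]$. Since $w>1$ on $I$, $v_t>1$ on $I$ for every $t\in[0,1)$, so $\varphi(v_t)\equiv\varphi(w)=E/2$ on $I$; admissibility $v_t(0)=d$ is preserved because $0\notin I$; and $v_t'=(1-t)w'$ contracts the gradient pointwise. The energy change is therefore purely elastic:
\[
J(v_t)-J(w)=\frac{G}{2E}\left(\frac{u_r}{L}\right)^2\bigl((1-t)^2-1\bigr)\int_I (w')^2\,dx.
\]
The integral is strictly positive, since $w'\equiv 0$ on $I$ would force $w$ constant on $I$ equal to its boundary value $1$, contradicting $w>1$. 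Hence $J(v_t)<J(w)$, while $\|v_t-w\|_{H^1}^2=t^2\int_I((w-1)^2+(w')^2)\,dx\to 0$ as $t\to 0^+$, violating local minimality.

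With the key claim in hand, the two cases are immediate. If $d<1$, then $w(0)=d<1$ forces $0\notin\{w>1\}$, every component is of the eliminated type, and $\{w>1\}=\emptyset$. If $d>1$, then $0\in\{w>1\}$, the component $I_0$ containing $0$ has the form $[0,\zeta)$ with $\zeta\in(0,1]$, and no further components are allowed. It remains to exclude the borderline case $\zeta=1$ with $w(1)=1$: the Euler--Lagrange equation in the interior of $\{w>1\}$ is $w''=0$ (since $\varphi$ is flat there), so $w$ is linear with $w(0)=d$ and $w(1)=1$, i.e., $w(x)=d-(d-1)x$ and $w'\equiv 1-d<0$; testing $w+\varepsilon\eta$ against any nonnegative smooth $\eta$ with $\eta(0)=0$ and $\eta(1)>0$ keeps $w+\varepsilon\eta\geq 1$ on $[0,1]$ for small $\varepsilon>0$ (so $\varphi$ is unchanged), while the elastic part expands as $\varepsilon\cdot\tfrac{G}{E}(u_r/L)^2(1-d)\eta(1)+O(\varepsilon^2)<0$, again contradicting local minimality. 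Therefore either $\zeta<1$, matching the second alternative, or $\zeta=1$ with $w(1)>1$, in which case $\{w>1\}=[0,1]$.

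The step I expect to be the main obstacle is the construction of the right $H^1$-small perturbation. The natural truncation $\min(w,1)$ fails on both counts (no cohesive gain and not $H^1$-close), and the argument only goes through by noticing that the interpolation $(1-t)w+t$ stays strictly above $1$ on $I$ for small $t$ while shrinking the gradient by the factor $(1-t)$. Once this competitor is in place, the rest is an elementary component-counting argument together with a direct Euler--Lagrange check at the free endpoint.
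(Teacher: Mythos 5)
Your proof is correct, and at the decisive step it takes a genuinely different route from the paper's. The paper also works with the connected components of the relatively open set $\{w>1\}$, but on each component not attached to the loaded end it derives the Euler--Lagrange equation $w''=0$ from the vanishing first variation (the cohesive density being flat above $1$) and then uses the boundary values $w=1$ at the component's endpoints, together with the natural condition $w'(1)=0$ when the component reaches $x=1$, to force $w\equiv 1$ there, a contradiction. You instead exhibit the explicit competitor $v_t=(1-t)w+t$, which leaves the cohesive energy unchanged (since $v_t>1$ on the component and $\varphi$ is flat there), contracts the elastic energy by the factor $(1-t)^2$, and is $H^1$-close to $w$ for small $t$; this is more elementary, quantitative, and avoids any appeal to natural boundary conditions. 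Your argument is also slightly more complete: the borderline configuration $\{w>1\}=[0,1)$ with $w(1)=1$ is not literally covered by the paper's case list (i)--(iii) (none of the three alternatives holds for it), whereas you exclude it by the one-sided variation at the free end; and you prove the classification in the form $\{w>1\}=[0,\zeta)$, which is the form actually used in the paper's proof and the only one compatible with $w(0)=d>1$ and relative openness --- the ``$(0,\zeta]$'' in the statement is evidently a typo. Two minor points you should make explicit: for $d<1$, it is the continuity of $w$ at $x=0$ (so $w<1$ on a neighbourhood of $0$) that guarantees every component of $\{w>1\}$ has strictly positive infimum, so that your key claim applies to all of them; and for $d>1$ the component containing $0$ could a priori be all of $[0,1]$ rather than of the form $[0,\zeta)$, a case your final sentence absorbs only implicitly. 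Neither affects correctness.
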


\begin{proof}
\noindent  We begin with (2). Let $d> 1$, then $\{w>1\}$ is a non empty
relatively open subset of $[0,1]$ and therefore there exists a countable
collection of disjoint open intervals of $\mathbb R$, say $I_j,\ j\in \mathbb
N$ such that
$$
\{w>1\}= \bigcup_{j} (I_j\cap [0,1]).
$$
Assume by contradiction that for every $\zeta \in (0,1), \ \ \{w> 1\}\not = [0,
\zeta )$, then one of the following conditions holds true \par\medskip i) $\{w
> 1\}= [0,1]$\par\medskip ii) $\exists\ \alpha\in (0,1)$
 such that $(\alpha, 1]\subset
\{w>1\}$ and $w(\alpha)=1$\par\medskip iii) $\exists\ \beta,\ \gamma\in (0,1)$
such that $(\beta,\gamma)\subset \{w>1\}$ and $w(\beta)= w(\gamma)=1.$
\par\medskip
\noindent If  ii) holds then let $\eta\in C^1_0(0,1),\ \ \eta\equiv 0$ in
$[0,\alpha]$: since $w$ is a local minimizer we get for every $\varepsilon> 0$
such that $\varepsilon \|\eta\|_{H^1}< \delta$
$$
0\le J(w+\varepsilon\eta)-J(w)= \varepsilon\int_\alpha^1 (\nu^2 w' \eta'+
w\eta{\bf 1}_{\{w+\varepsilon\eta\le 1\}}) \,dx+ o(\varepsilon)
$$
and by
letting $\varepsilon\to 0$ we have
$$
\int_\alpha^1  w' \eta'=0
$$
that is $ w'' = 0$ in $(\alpha,1)$. Now, since $w(\alpha)=1$ and due to the
natural boundary condition $ w'(1)=0,\ $ we get  $w\equiv 1$ in the whole
$(\alpha,1)$, which is a contradiction. Case iii) follows analogously and hence
2) is proven. In order to prove 1) suppose by contradiction that $\{w > 1\}\not
= \emptyset$ with $d< 1$. Then either ii) or iii) holds true and a
contradiction can be obtained also in this case.\end{proof}

We can now study the relation between the local minimizers of the continuum problem
(\ref{link1}) and the solution of the linear system (\ref{link11}).

\begin{theorem}  $w\in \mathcal A$ is a local minimizer of $J$ defined by
(\ref{link1}) if and only if
there exists $\bar\zeta \in (0,1)$ such that \be \left\{
\begin{array}{ll} &  w''= 0  \hbox{ in}\ \ (0,\bar\zeta )\\
&\\
& w(0)=d;\ w(\bar\zeta )=1\end{array}\right. \label{Fr1}\ee and \be\left\{
\begin{array}{ll} &  \nu ^2 w''=\displaystyle w\
 \ \hbox{in}\ \ (\bar\zeta ,1)\\
&\\
& w(\bar\zeta )=1,\ w'(1)= 0\end{array}\right.\label{Fr2}\ee and $\bar \zeta$
is a local minimizer of $\hat J(\zeta)=J(w_\zeta)$.\end{theorem}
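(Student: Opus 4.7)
The plan is to prove the two implications separately, leaning on the preceding Theorem \ref{TTT} and on the explicit family $w_\zeta$ given by (\ref{ucont}).

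For the forward implication (local minimizer $\Rightarrow$ ODE system plus local minimality of $\hat J$), I would first apply Theorem \ref{TTT} to pin down the structure of the contact set. In the nontrivial partial-detachment case relevant to the statement, this forces $\{w>1\}=(0,\bar\zeta]$ for some $\bar\zeta\in(0,1)$. Inside the open intervals $(0,\bar\zeta)$ and $(\bar\zeta,1)$ the function $w$ is strictly above or strictly below $1$, so any compactly supported variation $\eta$ in one of the intervals keeps $w+\varepsilon\eta$ in the same phase for small $\varepsilon$ and the functional is smooth along the perturbation. First-order optimality then yields the classical Euler--Lagrange equations $w''=0$ on $(0,\bar\zeta)$ (using $\varphi(w)=\frac{1}{2}$ constant) and $\nu^2 w''=w$ on $(\bar\zeta,1)$ (using $\varphi(w)=\frac{1}{2}w^2$). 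The Dirichlet condition $w(0)=d$ is built into $\mathcal A$; continuity of $w$ together with the identification of $\{w>1\}$ forces $w(\bar\zeta)=1$; and allowing $\eta(1)\neq 0$ in a variation supported in $[\bar\zeta,1]$ extracts the natural condition $w'(1)=0$. Thus $w=w_{\bar\zeta}$. To conclude that $\bar\zeta$ locally minimizes $\hat J$, I use that the map $\zeta\mapsto w_\zeta$ given by (\ref{ucont}) is continuous into $H^1(0,1)$, so $w_\zeta$ lies in the $\delta$-ball of local minimality of $w$ for $\zeta$ close enough to $\bar\zeta$, which forces $\hat J(\zeta)=J(w_\zeta)\geq J(w)=\hat J(\bar\zeta)$.

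For the reverse implication, suppose $w=w_{\bar\zeta}$ satisfies (\ref{Fr1})--(\ref{Fr2}) and $\bar\zeta$ locally minimizes $\hat J$. Take any $v\in\mathcal A$ with $\|v-w\|_{H^1}\leq\delta$. The Sobolev embedding $H^1\hookrightarrow C^0$ combined with the explicit formula for $w$, which crosses $1$ transversally at $\bar\zeta$ with $|w'(\bar\zeta^\pm)|>0$ and is strictly separated from $1$ away from $\bar\zeta$, implies that $v$ shares the phase of $w$ outside a narrow transition layer around $\bar\zeta$ when $\delta$ is small. I would then define $\zeta_v$ as the position of the first crossing of $v$ through the value $1$; for $\delta$ small this is close to $\bar\zeta$ and depends continuously on $v$. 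Comparing $v$ with the explicit minimizer $w_{\zeta_v}$ associated to the prescribed detachment interval $(0,\zeta_v]$: on each side the energy restricted to competitors with this detachment structure is strictly convex (the cohesive term is quadratic on $\{v<1\}$ and constant on $\{v>1\}$; the elastic term is quadratic throughout), and $w_{\zeta_v}$ is its unique minimizer. This delivers $J(v)\geq J(w_{\zeta_v})=\hat J(\zeta_v)\geq\hat J(\bar\zeta)=J(w)$, the last inequality by the local minimality of $\hat J$.

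The main obstacle is ensuring that the competitor $v$ truly has $\{v>1\}$ equal to a single interval $(0,\zeta_v]$, since a generic $H^1$-small perturbation can a priori create extra crossings of the value $1$. I expect to need a structural reduction in the same spirit as Theorem \ref{TTT}, showing that any spurious bump above $1$ in the elastic region, or dip below $1$ in the debonded region, can be smoothed out without increasing $J$, so that without loss of generality $v$ has the same interval-type detachment structure as $w_{\zeta_v}$. The transversality $|w'(\bar\zeta^\pm)|>0$ together with the $L^\infty$ bound $\|v-w\|_\infty\leq C\|v-w\|_{H^1}$ is what keeps $\zeta_v$ well-defined and continuous, after which the argument collapses to the one-dimensional statement about $\hat J$ which is given by hypothesis.
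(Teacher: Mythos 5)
Your forward implication is essentially the paper's: Theorem \ref{TTT} pins down $\{w>1\}=(0,\bar\zeta]$, inner variations in each phase give the two systems (\ref{Fr1})--(\ref{Fr2}) together with the natural condition $w'(1)=0$, and the $H^1$-continuity of $\zeta\mapsto w_\zeta$ (which the paper justifies by citing results on elliptic problems on variable domains) transports local minimality of $J$ to local minimality of $\hat J$. The reverse implication is where the two arguments diverge, and it is exactly at the step you flag as "the main obstacle" that your proposal has a genuine problem. You propose to reduce an arbitrary competitor $v$ near $w_{\bar\zeta}$ to one with single-interval detachment structure by smoothing out, without increasing $J$, (i) bumps above $1$ in the bonded region and (ii) dips below $1$ in the debonded region. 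Reduction (i) is sound: truncating $v$ from above at the level $1$ leaves $\varphi$ unchanged (by (\ref{ffc}) it is constant on $\{w\ge 1\}$) and can only decrease the gradient term. Reduction (ii) is false as stated: on a dip below $1$ the cohesive density is $\frac12 w^2<\frac12$, so pushing the function back up to $1$ strictly increases the cohesive energy; indeed an excursion below $1$ over an interval of length $\ell$ that returns to $1$ (a cosh arc solving $\nu^2w''=w$) has energy $\nu\tanh(\ell/(2\nu))<\ell/2$, strictly less than the energy of staying at $1$, so such dips cannot be removed for free by any local modification. Your argument survives only because you take $\zeta_v$ to be the \emph{first} crossing of $v$ through $1$: then $v>1$ on $(0,\zeta_v)$ by construction, reduction (ii) is never needed, and on $(\zeta_v,1)$ reduction (i) followed by the convexity comparison with the cosh profile yields $J(v)\ge\hat J(\zeta_v)\ge\hat J(\bar\zeta)$. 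This needs to be said explicitly and correctly; appealing to "a structural reduction in the same spirit as Theorem \ref{TTT}" does not work, because the first-variation argument of that theorem applies to minimizers, not to arbitrary competitors.

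For comparison, the paper closes the same gap by a different device: it modifies the competitor only on a small interval $(\bar\zeta-\frac\eta2,\bar\zeta+\frac\eta2)$, replacing it there by the \emph{absolute minimizer} $\hat u$ of the localized energy with matching boundary data (which can only lower $J$), and then legitimately applies the Theorem \ref{TTT}-type variation argument to $\hat u$ — since $\hat u$ is a minimizer — to conclude that it crosses $1$ exactly once at some $\zeta^*$, whence $J(v)\ge J(w^*)\ge J(w_{\zeta^*})=\hat J(\zeta^*)\ge\hat J(\bar\zeta)$. Once your truncation step is carried out correctly, your route is somewhat more elementary because it avoids the auxiliary constrained minimization; but as written, half of the structural reduction you rely on would fail, so the proof is incomplete until that step is replaced by the first-crossing plus truncation argument (or by the paper's replacement-by-$\hat u$ argument).
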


\begin{proof} By Theorem \ref{TTT} we have that if $w\in \mathcal A$ is a local minimizer of
$J$ then there exists $\bar \zeta$ such that $w$ satisfies (\ref{Fr1}) and
(\ref{Fr2}). Moreover, given $\delta>0$, there exists a given small enough
$\varepsilon>0$, such that $w_{\varepsilon}$, the unique solution of
$$\left\{
\begin{array}{ll} &  w''= 0\ \ \hbox{in}\ \
(0,\bar\zeta
-\varepsilon)\\ &\\
& w(0)=d;\ w(\bar\zeta -\varepsilon)=1\end{array}\right.$$ and
$$\left\{
\begin{array}{ll} & \nu ^2 w''= \displaystyle w\
\ \hbox{in}\ \ (\bar\zeta -\varepsilon,1)\\
&\\
& w(\bar\zeta -\varepsilon)=1,\ w'(1)=0 ,\end{array}\right.$$ satisfies
$\|w_\varepsilon-w\|_{H^1}< \delta$. This follows from well known results for
elliptic equations with variable domains (see \cite{BB}). Hence
$J(w_\varepsilon)\le J(w)$ and therefore $\bar\zeta $ is a local minimizer of
the function $\hat J(\zeta)= J(w_\zeta)$.

To prove the inverse statement we have to show that if $\bar\zeta$
 is a local minimizer for $\hat J$ then $w_{\bar\zeta}$ is a local
minimizer for $J$. Let $\eta>0$ such that for every $|\zeta-\bar\zeta|< \eta, \
\hat J(\bar\zeta)\le \hat J(\zeta)$: we may choose $\beta> 0$ such that if
$v\in H^1(0,1), \ v(0)=0,\ \|v\|_{H^1}\le\beta$, then $w_{\bar\zeta}+v> 1$ in
$[0,\bar\zeta -{\frac \eta 2})$ and $w_{\bar\zeta}+v< 1$ in $(\bar\zeta +{\frac
\eta 2},1]$. Hence
$$\begin{array}{lll}
&\displaystyle J(w_{\bar\zeta}+v)&\displaystyle \ge{\frac 1 2}
\int_0^{\bar\zeta -{\frac \eta 2}}\left (\nu^2| w'_{\bar\zeta}+ v'|^2+ 1\right
)\,dx+{\frac 1 2}\int_{\bar\zeta -{\frac \eta 2}}^{\bar\zeta +{\frac \eta
2}}(\nu^2| \hat u'|^2+ |\hat u\wedge 1|^2)\,dx+\vspace{0.2 cm}\\
&\displaystyle &+\displaystyle{\frac 1 2}\int_{\bar\zeta +{\frac \eta
2}}^1(\nu^2| w'_{\bar\zeta}+ v'|^2+ |w_{\bar\zeta}+v|^2)\,dx
\end{array}
$$
where $\hat u(x)\wedge 1=\hat u(x )$ if $u(x)<1$ and $\hat u(x)\wedge 1=1$ if
$\hat u(x)\geq 1$. Here $\hat u$ denotes an absolute minimizer of
$$
u\to {\frac 1 2}\int_{\bar\zeta -{\frac \eta 2}}^{\bar\zeta +{\frac \eta
2}}(\nu^2|u'|^2+|u\wedge 1|^2)\,dx
$$
among all
$u\in H^1(\bar\zeta -{\frac \eta 2},\bar\zeta +{\frac \eta 2})$ such that
$u({\bar\zeta \pm{\frac \eta 2}})=w_{\bar\zeta}(\bar\zeta \pm{\frac \eta {2
}})+v(\bar\zeta \pm{\frac \eta {2 }})$.

Therefore by defining
$$ w^*(x)=\left\{
\begin{array}{ll} & w_{\bar\zeta}(x)+v(x) \ \ \hbox{in}\ \ [0,1]\setminus [\bar\zeta -{\frac \eta
2},\bar\zeta +{\frac \eta
2}]\\
&\\
& u(x)\ \ \hbox{otherwise}\end{array}\right.$$ we get
$$
J(w_{\bar\zeta}+v)\ge J(w^*)\ge J(w_{\zeta^*}).
$$

 An argument very close to that used in
 the beginning of this Appendix shows that there exists a unique $\zeta^*\in
(\bar\zeta-{\frac \eta {2 }},\bar\zeta+{\frac \eta {2 }})$ such that $u> 1$ in
$(\bar\zeta -{\frac \eta 2},\zeta^*)$ and  $u< 1$ in $(\zeta^*, \bar\zeta
+{\frac \eta 2})$. Then, taking into account that
$$
{w_{\zeta^*}}_{|_{(0,\zeta^*)}}\!\!\in\! \argmin \!
\left\{\!\!\int_0^{\zeta^*}\!\!\!\!\!\!(\nu^2| w'|^2+|w\!\wedge \!1|^2)dx: w\in
H^1(0,\zeta^* ), w(\zeta^* )\!=\!1, \!w(0)\!=\!d\!\right\}
$$
and
 $$
{w_{\zeta^*}}_{|_{(\zeta^*,1)}}\!\!\in\argmin\!\left\{\!\int_{\zeta^*}^1(\nu^2|w'|^2+|w\wedge
1|^2)dx: w\in H^1(\zeta^*,1), w(\zeta^*)\!=\!1\right\},
$$
 since
$\bar\zeta$ is a local minimizer for $J$ and $|\bar\zeta-\zeta^*|\le\eta/2$ ,
we argue
$$
 J(w_{\zeta^*})=J(\zeta^*)\ge J(\bar\zeta)=J(w_{\bar\zeta}),
$$
thus proving the local minimality of $w_{\bar\zeta}$.

\end{proof}

\end{document}